\newcommand*{\N}{\mathbb{N}}
\newcommand*{\srel}{\mathcal{R}}
\newcommand*{\frel}{f}
\newcommand*{\perm}{\mathcal{P}}
\newcommand*{\permutations}[1]{\mathfrak{S}_{#1}}
\newcommand*{\colsep}{\#}
\newcommand*{\base}[3][]{\left\langle {#3} \right\rangle_{#1}}
\newcommand*{\compl}[1]{\overline{#1}}
\newcommand*{\size}[1]{\left| #1 \right|}
\newcommand*{\separator}{\$}
\newcommand*{\ie}{i.e.\xspace}
\newcommand*{\resp}{\emph{resp.}\xspace}
\newcommand*{\eg}{\emph{e.g.}\xspace}
\newcommand*{\ssi}{\emph{iff}\xspace}
\newcommand*{\DFA}{\mathrm{DFA}}
\newcommand*{\NFA}{\mathrm{NFA}}
\newcommand*{\AFA}{\mathrm{AFA}}
\newcommand*{\coAFA}{\mathrm{co}$-$\mathrm{AFA}}
\newcommand*{\bmark}{\vdash}
\newcommand*{\emark}{\dashv}
\newcommand*{\bomark}{\#}
\newcommand*{\alphabet}{\Sigma}
\newcommand*{\atm}{\mathcal{A}}
\newcommand*{\states}{Q}
\newcommand*{\init}{I}
\newcommand*{\final}{F}
\newcommand*{\transrel}{\Delta}
\newcommand*{\goleft}{\vartriangleleft}
\newcommand*{\goright}{\vartriangleright}
\newcommand*{\stay}{\triangledown}
\newcommand*{\moves}{\{\goleft, \stay, \goright\}}
\newcommand*{\markers}{\{\bmark, \emark\}}
\newcommand*{\addbounds}[1]{{#1}^\emark_\bmark}
\newcommand*{\ealph}{\addbounds{\alphabet}}
\newcommand*{\alphb}{\alphabet_\bmark}
\newcommand*{\some}{a}
\newcommand*{\eword}[1]{{#1}^\emark_\bmark}
\newcommand*{\reverse}[1]{\tilde{#1}}
\newcommand*{\model}{two-way two-tape automaton\xspace}
\renewcommand*{\models}{two-way two-tape automata\xspace}
\newcommand*{\MODELS}{Two-way Two-tape Automata\xspace}
\newcommand*{\dmodel}{deterministic two-way two-tape automaton\xspace}
\newcommand*{\omodel}{two-way automaton\xspace}
\newcommand*{\amodel}{alternating two-way two-tape automaton\xspace}
\newcommand*{\amodels}{alternating two-way two-tape automata\xspace}
\newcommand*{\nfa}{non-deterministic finite automaton\xspace}
\newcommand*{\wordsof}[1]{{#1}^*}
\newcommand*{\words}{\wordsof{\alphabet}}
\newcommand*{\picturesof}[1]{{#1}^{**}}
\newcommand*{\upictures}{\picturesof{\{\some\}}}
\newcommand*{\tensorprod}[2]{#1 \otimes #2}
\newcommand*{\enum}{\mathcal{E}}
\newcommand*{\incrementS}{\mathcal{I}}
\newcommand*{\increment}{\mathcal{J}}
\newcommand*{\adaptarrow}[2][]{\mathchoice
         {\xrightarrow{#2}_{#1}}
         {\xrightarrow{\smash{\lower1pt\hbox{$\scriptstyle #2$}}}_{#1}}
         {\text{Erreur}}
         {\text{Erreur}}}
\newcommand*{\xadaptarrow}[2][]{\mathchoice
         {\xrightarrow[#1]{#2}}
         {\xrightarrow[\smash{\lower1pt\hbox{$\scriptstyle #1$}}]{\smash{\lower1pt\hbox{$\scriptstyle #2$}}}}
         {\text{Erreur}}
         {\text{Erreur}}}
\newcommand*{\instruction}[4]{{#1}, {#2} \mid {#3}, {#4}}
\newcommand*{\transition}[7][]{#2 \adaptarrow[#1]{\instruction{#3}{#4}{#5}{#6}} #7}
\newcommand*{\stransition}[5][]{#2 \adaptarrow[#1]{#3\mid #4} #5}
\newcommand*{\xleadsto}[1]{\overset{#1}{\leadsto}}
\newcommand*{\xadaptleadsto}[1]{\mathchoice
         {\xleadsto{#1}}
         {\xleadsto{\smash{\lower1pt\hbox{$\scriptstyle #1$}}}}
         {\text{Erreur}}
         {\text{Erreur}}}
\newcommand*{\succconf}[3][]{#2 \xadaptleadsto{#1} #3}
\newcommand*{\rela}[1]{\mathcal{R}\left(#1\right)}
\newcommand*{\Coprime}{\mathrm{Coprime}}
\begin{document}
\frontmatter          
\pagestyle{headings}  
\addtocmark{Two-way Two-tape Automata} 
%
%
\mainmatter              
\title{Two-way Two-tape Automata}
\toctitle{Two-way Two-tape Automata}
%
%
\author{Olivier Carton\inst{1}\fnmsep\thanks{funded by the DeLTA project (ANR-16-CE40-0007)} \and
      L\'{e}o Exibard\inst{2} \and Olivier Serre\inst{1}}
\authorrunning{O. Carton, L. Exibard and O. Serre} 
\tocauthor{Olivier~Carton, L\'{e}o~Exibard and Olivier~Serre}
%
%
\institute{IRIF,
  Universit\'{e} Paris Diderot \& CNRS
\and
D\'{e}partement d'Informatique,
ENS de Lyon
}

\maketitle              

\setcounter{footnote}{0} 

\begin{abstract}
  In this article we consider two-way two-tape (alternating) automata
  accepting pairs of words and we study some closure properties of this
  model.  Our main result is that such alternating automata are not closed
  under complementation for non-unary alphabets.  This improves a similar
  result of Kari and Moore for picture languages.  We also show that these
  deterministic, non-deterministic and alternating automata are not closed
  under composition.
  \keywords{Alternating \textperiodcentered{} Multi-tape automata \textperiodcentered{} Complementation}
\end{abstract}

\section{Introduction}

In this article we consider two-way two-tape (alternating) automata that
are designed to recognize binary relations between finite words. Although
these automata are quite natural as read-only Turing machines, almost no
work has been devoted to this model of computation.  We study their
properties, with a special focus on their closure properties, in particular
under complementation and composition.

Finite states machines with inputs and outputs are widely used in many
different areas like coding \cite{LindMarcus95}, computer arithmetics
\cite{Frougny02}, natural language processing \cite{RocheSchabes97} and
program analysis \cite{CohenCollard98}.  The simplest model is obtained by
adding outputs to a classical finite-state (non)-deterministic one-way
automaton to get a machine known as a transducer. In a transducer, the
input word is only scanned once by a one-way head and the output is
produced by the transitions used along the reading. A transducer can
equivalently be seen as a machine with two tapes, one for the input and the
other for the output, that are scanned once by two one-way heads.
Relations realized by this kind of machines are called rational.  They have
been intensively studied since the early days of automata theory
\cite{ElgotMezei65} and they enjoy some nice properties
\cite{Sakarovitch09}.  They are, for instance, closed under composition,
but not under complementation.

Rational relations turn out to form a rather small class, hence classes of
stronger transducers have been introduced by enriching transducers with
extra features like two-wayness and/or alternation. A well studied class is
that of two-way transducers in which the input word is scanned by a two-way
head and the output is produced (or equivalently scanned) by a one-way
head.  In that class, deterministic machines are of special interest as
they are equivalent to MSO-transductions and turn out to be closed under
composition \cite{EngelfrietHoogeboom01}.

In this article, we consider machines with two tapes which are both scanned
by two-way heads.  In this model, it is important to consider the output
word as already written, and not produced on some output tape to ensure
consistency, as it is scanned several times.

Another key feature, used to increase either the expressive power or the
succinctness of finite state machines, is alternation, which allows the
machine to spawn several copies of itself.  Alternation often provides for
free closure under complementation for machines on finite structures like
words or trees. Indeed, the dual machine obtained by swapping existential
and universal states and complementing the acceptance condition accepts the
complement language as long as all computations terminate: if the run
of a machine may loop, the closure under complementation of alternating
machines may no longer hold.

Picture automata introduced in \cite{BlumHewitt67} scan a 2-dimensional
array of symbols with moves in the four cardinal directions to either
accept or reject it.  As in the case of two-way automata, the border of the
array is marked by special symbols. A run of such an automaton may loop as
it can scan the same position twice with the same control state.  These
picture automata differ from classical word or tree automata where all
variants (deterministic, non-deterministic, alternating) are equivalent.
Indeed, it has been shown by Kari and Moore that alternating picture
automata are not closed under complementation as soon as the alphabet size
is greater than~1 \cite{KariMoore01}.  This means that loops are inherent
to the model and that they cannot be removed.

In this article, we show that two-tape two-way alternating automata are not
closed under complementation either.  Picture automata are actually very
close to the model that we consider.  In particular they coincide for unary
alphabets.  Indeed, over a unary alphabet, a pair of words is merely a pair
of integers (their lengths) and this is equivalent to a two-dimensional
array on a unary alphabet.  However, as soon as the alphabets have
cardinality at least $2$, the models are distinct.  Indeed, for an alphabet
of size~$k$, the number of $m \times n$-arrays is $k^{mn}$ while the number
of pairs of words is only $k^{m+n}$.

The fact that the two models coincide for unary alphabets allows us to
recover immediately some separation and undecidability properties.  For
instance, it has been shown that such deterministic picture automata are
strictly less powerful than non-deterministic ones which are, in turn, less
powerful than alternating automata \cite{KariMoore01}. These results carry
over the two-tape two-way automata that we consider.

To prove that alternating automata are not closed under complementation, we
use a counter-example that is close to the one in~\cite{KariMoore04} for
picture automata. However, since our coding is different, we need some
extra arguments to show that it is accepted by an alternating automaton
and, to show that its complement cannot, we use a more direct proof.

\section{Two-way Two-tape Automata}

In this article, $\alphabet$ is a finite alphabet and $\words$ denotes the
set of finite words over $\alphabet$. For a word $u \in \words$, we denote
its length by $\size{u}$, and for each $1\leq i\leq \size{u}$ we denote by
$u_i$ its $i$-th letter. From now on, $\bmark$ (begin) and $\emark$ (end)
are reserved characters not belonging to $\alphabet$ and marking word
boundaries. For simplicity of notation, we let $\ealph = \alphabet \cup
\markers$.  For $u \in \words$, we let $\eword{u} = {\bmark}u{\emark}$,
with ${\eword{u}}_0 = {\bmark}$, ${\eword{u}}_{\size{u}+1} = {\emark}$ and
${\eword{u}}_i = u_i$ for each $1 \leq i \leq \size{u}$.

A \emph{(non-deterministic) two-way two-tape finite automaton} is a tuple
$\atm = \left(\states, \alphabet, \transrel,\init, \final \right)$, where
$\states$ is the set of \emph{states}, and $\init, \final \subseteq
\states$ are respectively the sets of \emph{initial} and \emph{final}
states.  We call $\transrel \subseteq \left(\states \times {\ealph} \times
  {\ealph}\right) \times (\states \times \moves \times \moves)$ the
\emph{transition relation}. We use the notation $\stransition{p}{a_1,
  a_2}{d_1, d_2}{q}$ for $(p, a_1, a_2, q, d_1, d_2) \in \transrel$. We
require that the reading heads cannot cross the words boundaries, \ie for
every transition $\stransition{p}{a_1, a_2}{d_1, d_2}{q}$ and every $i=1,2$
if $a_i = {\bmark}$ (\resp $a_i = {\emark}$) then $d_i \neq {\goleft}$
(\resp $d_i \neq {\goright}$).

An automaton~$\atm$ is said to be \emph{deterministic} whenever for every
state $p \in \states$ and every letters $ a_1, a_2 \in \alphabet$, there
exists at most one $q \in \states, d_1, d_2 \in \moves$ such that
$\stransition{p}{a_1, a_2}{d_1, d_2}{q}$.

Note that extending the model to more than two tapes is straightforward.
Note also that if we restrict the model to a single tape we retrieve the
classical notion of two-way automata on finite words.

\begin{figure}[ht]
  \centering
  \begin{tikzpicture}

  \edef\sizetape{0.5cm}
  \tikzstyle{tmtape}=[draw,minimum size=\sizetape]

  \begin{scope}
    \node[draw, rounded corners, minimum height=0.7cm, minimum width=0.7cm] (Q) {$Q$};
  \end{scope}

  \begin{scope}[shift={(2cm,0.8cm)}, start chain=1 going right,node distance=-0.15mm]
      \node [on chain=1,tmtape] {$\bmark$};
      \node [on chain=1,tmtape] {$u_1$};
      \node [on chain=1,tmtape, minimum width=1cm] {$\cdots$};
      \node [on chain=1,tmtape, thick] (input1) {$u_i$};
      \node [on chain=1,tmtape, minimum width=1.5cm] {$\cdots$};
      \node [on chain=1,tmtape] {$u_m$};
      \node [on chain=1,tmtape] (end1) {$\emark$};
  \end{scope}
      \node[right=of end1] (legend1) {First Tape};

  \begin{scope}[shift={(2cm,-0.8cm)}, start chain=1 going right,node distance=-0.15mm]
      \node [on chain=1,tmtape] {$\bmark$};
      \node [on chain=1,tmtape] {$v_1$};
      \node [on chain=1,tmtape, minimum width=1.5cm] {$\cdots$};
      \node [on chain=1,tmtape, thick] (input2) {$v_j$};
      \node [on chain=1,tmtape, minimum width=1.5cm] {$\cdots$};
      \node [on chain=1,tmtape] {$v_n$};
      \node [on chain=1,tmtape] (end2) {$\emark$};
  \end{scope}
      \node (legend2) at (legend1 |- end2) {Second Tape};

    \draw[->, rounded corners=7pt] (Q.north east) ++ (0,-0.25cm) -| (input1);
    \draw[->, rounded corners=7pt] (Q.south east) ++ (0, 0.25cm) -| (input2);

    \draw[->] (input1.south) ++ (-0.25cm, -0.25cm) -- ++(-0.5cm, 0);
    \draw[->] (input1.south) ++ (0.25cm, -0.25cm) -- ++(0.5cm, 0);

    \draw[->] (input2.north) ++ (-0.25cm, 0.25cm) -- ++(-0.5cm, 0);
    \draw[->] (input2.north) ++ (0.25cm, 0.25cm) -- ++(0.5cm, 0);
  \end{tikzpicture}
  \caption{Schema of a two-way two-tape finite automaton}
  \label{schemaTwttfa}
\end{figure}
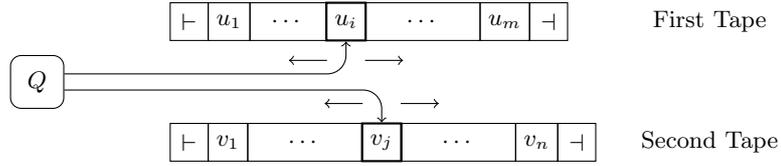

A \emph{configuration} is a triple $(q, i, j) \in \states \times \N \times
\N$, where $q$ is the current state, and $i$ (resp. $j$) is the position of
the reading head on the first (resp. second) tape (recall that the $\bmark$
marker is by convention at position $0$).

We say that a configuration $(q_2, i_2, j_2)$ is a \emph{successor} of a
configuration $(q_1, i_1, j_1)$ with regard to input $(u, v)$, written
$\succconf[(u,v)]{(q_1, i_1, j_1)}{(q_2, i_2, j_2)}$, when automaton
$\atm$ can go from one configuration to the next in a single step, \ie if
$\transition{q_1}{(\eword{u})_{i_1}}{(\eword{v})_{j_1}}{d}{e}{q_2}$,
and if $i_2 = i_1 + \chi(d)$ and $j_2 = j_1 + \chi(e)$, where
$\chi(\goleft) = -1$, $\chi(\stay) = 0$ and $\chi(\goright) = 1$.

A \emph{run} of $\atm$ on input $(u, v) \in \words \times \words$ is a
(possibly infinite) sequence $(p_k, i_k, j_k)_{1 \leq k < n}$, where $n \in
\N \cup \{\infty\}$, of successive configurations: for every $1 \leq k <
n$, one has $\succconf[(u,v)]{(p_k, i_k, j_k)}{(p_{k+1}, i_{k+1},
  j_{k+1})}$.

An \emph{initial run} is a run starting with an initial configuration, \ie
one such that $p_0 \in \init$ and $i_0 = j_0 = 0$.  It is \emph{accepting}
when it is moreover finite and contains a final state $f \in \final$, \ie
there exists $k \leq n$ such that $ p_k \in \final$.

The \emph{relation $\rela{\atm}$ accepted by} a \model $\atm$ is the set of
pairs $(u,v)$ such that there exists an accepting run of~$\atm$ on input
$(u,v)$.

We now introduce alternating automata, which generalize non-deterministic
automata. An \emph{alternating automaton} is an automaton whose set of
control states is partitioned into \emph{existential} ($Q_\exists$) and
\emph{universal} ($Q_\forall$) states. A configuration is defined as in the
non-deterministic setting and it is existential (\resp universal) if the
control state is.

\emph{Runs} of alternating automata are (possibly infinite) trees whose
nodes are labeled by configurations, and such that each inner node $u$
labeled by a configuration~$C$ satisfies the following conditions:
\begin{itemize} 
\item If $C$ is existential then $u$ has a single son that is
  labeled by a successor configuration of $C$.
\item If $C$ is universal and if $\{C_1,\dots,C_k\}$ denotes all successor
  configurations of $C$, then $u$ has $k$ sons each of them labeled by a
  different $C_i$ for $1\leq i\leq k$.
\end{itemize}

A run is \emph{accepting} if it is finite, its root is labeled by an
initial configuration and all its leaves are labeled either by accepting
configurations, or by universal configurations that have no successor
configuration. Again, we define the relation accepted by an alternating
automaton as those pairs of words over which there is an accepting run.
Note that non-deterministic automata correspond to the case where
$Q_\forall = \emptyset$.

First we remark that if we restrict the model by forbidding the reading
heads to go to the left, we obtain a $1$-way model that recognizes the
rational relations, \ie those realized by transducers \cite{Sakarovitch09}.
Not surprisingly this is a restriction as \models can, for instance,
recognize deterministically the relation $\{(u, \reverse{u}) \mid u \in
\words\}$, where $\reverse{u}$ denotes the reverse of $u$.  The
corresponding automaton is depicted in Figure~\ref{reverse}.  This relation
cannot be recognized by classical one-way transducers.

\begin{figure}[ht]
  \centering
  \begin{tikzpicture}[->,>=stealth',shorten >=1pt,auto,node distance=3.2cm, semithick]
    \tikzstyle{every state}=[inner sep=0em, minimum size=1.5em]

  \node[initial, initial text =, state] (A)                    {$q_i$};
  \node[state, right of = A]            (B)                    {$q$};
  \node[accepting, state, right of= B]  (C)                    {$q_f$};

  \path (A) edge [out=120,in=60,loop] node {$\instruction{\bmark}{\alphb}{\stay}{\goright}$} (A);
  \path (A) edge              node {$\instruction{\bmark}{\emark}{\goright}{\goleft}$} (B);
  \path (B) edge [out=120,in=60,loop] node {$\instruction{a}{a}{\goright}{\goleft}$} (B);
  \path (B) edge [out=300,in=240,loop] node {$\instruction{b}{b}{\goright}{\goleft}$} (B);
  \path (B) edge              node {$\instruction{\emark}{\bmark}{\stay}{\stay}$} (C);
  \end{tikzpicture}
  \caption{Automaton recognizing $\{(u, \reverse{u}) \mid u \in \words\}$ when $\Sigma = \{a,b\}$}
  \label{reverse}
\end{figure}
  
Our model captures much more complex relations. It is shown in the next
section that \models can recognize the relation $\Coprime = \{(a^p, a^q)
\mid p \wedge q = 1\}$, where $p \wedge q$ denotes the greatest common
divisor of $p$ and $q$, by implementing a variant of the Euclidean
algorithm.

\section{Picture Languages} \label{4wayAutomata}

For brevity we simply recall here some key notions on picture languages: we
refer the reader to \cite{GiammarresiRestivo92} for an excellent survey on
the topic with complete definitions of all objects discussed below.

A \emph{picture} $p$ of dimensions $m \times n$ is a matrix over a finite
alphabet $\alphabet$. For every $1\leq i\leq m$ and $1\leq j\leq n$, we
write $p_{i,j}$ for the content of the cell at position $(i,j)$. To
recognize pictures, we add a special marker $\bomark\notin\alphabet$ all
around the picture $p$, \ie we adopt the convention that for all $0 \leq i
\leq m+1$ and $0 \leq j \leq n+1$, $p_{0,j} = p_{m+1,j} = p_{i,0} = p_{i,
  n+1} = \bomark$.  We write $\picturesof{\alphabet}$ for the set of all
pictures over $\alphabet$. A \emph{picture language} is thus a subset of
$\picturesof{\alphabet}$.

In order to recognize picture languages, 4-way automata were first
introduced in~\cite{BlumHewitt67}. Such an automaton has a single head
which is able to move on a two-dimensional array of symbols (surrounded by
markers) in the four directions (up, down, left, right) and accepts when
reaching a final state. A schema is provided in Figure~\ref{schema4wfa}.

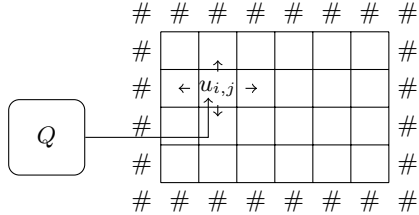
\begin{figure}[ht]
  \centering
  \begin{tikzpicture}[scale=0.5]

  \edef\sizetape{0.8cm}
  \tikzstyle{tmtape}=[draw,minimum size=\sizetape]

  \begin{scope}
    \node[draw, rounded corners, minimum height=1cm, minimum width=1cm] (Q) {$Q$};
  \end{scope}

  \begin{scope}[shift={(3cm,-1.2cm)}]
    \draw[step=1.0, black] (0,0) grid (6,4);
    \foreach \i in {-0.5,6.5}
    \foreach \j in {-0.5, ..., 4.5}
    {\node at (\i, \j) {\bomark};}
    \foreach \j in {-0.5,4.5}
    \foreach \i in {0.5, ..., 5.5}
    {\node at (\i, \j) {\#};}
    \node (input) at (1.5, 2.5) {$u_{i,j}$};
    \coordinate (inputb) at (1.25, 2.25);
  \end{scope}

    \draw[->] (Q.east) -| (inputb);
    \draw[->] (input.east) -- ++(0.3cm,0);
    \draw[->] (input.west) -- ++(-0.3cm,0);
    \draw[->] (input.north) -- ++(0,0.3cm);
    \draw[->] (input.south) -- ++(0,-0.3cm);

  \end{tikzpicture}
  \caption{Schema of a 4-way automaton}
  \label{schema4wfa}
\end{figure}

The link between \models and picture automata is provided by special
pictures, called \emph{products of words}, that we now define. For $u, v
\in \words$, we define the picture $\tensorprod{u}{v} = ((u_i,
v_j))_{\tiny{\begin{array}{l} 1 \leq i \leq \size{u} \\ 1 \leq j \leq
      \size{v} \end{array}}}$ over the product alphabet $\alphabet \times \alphabet$.

Thus, any relation $\srel \subseteq \words \times \words$ is mapped to a
picture language $L_{\srel}^{\otimes} \subseteq \picturesof{(\alphabet
  \times \alphabet)}$.  Moreover, over unary alphabets, pictures languages
and binary relations over words are in one-to-one correspondance as the
pair $(a^m, a^n)$ unambiguously represents an image of dimensions $m \times
n$ and conversely.  Consequently, for $L \subseteq \upictures$,
$L$ is recognizable by a $4$-way deterministic (resp. non-deterministic,
alternating) automaton \ssi $\srel$ is recognizable by a deterministic
(resp. non-deterministic, alternating) \model where $\srel$ is the unique
relation such that $L = L_{\srel}^{\otimes}$.

Thus, all the results known for unary picture languages also hold in our
model when $\size{\alphabet} = 1$. In particular, in~\cite{KariMoore01}, it
is shown that determinism is strictly weaker than non-determinism, the
latter being weaker than alternation: $\DFA \subsetneq \NFA \subsetneq
\AFA$ where $\DFA$ (resp. $\NFA$, $\AFA$) is the class of relations
recognized by deterministic (resp. non-deterministic, alternating) \models.
The relation $\srel = \{(a^w, a^h) \mid \exists i,j \in \N, w = ih +
j(h+1)\}$ is such that $\srel \notin \DFA, \srel \in \NFA, \compl{\srel}
\notin \NFA, \compl{\srel} \in \AFA$, where $\compl{\srel}$ denotes the
complement of~$\srel$ in $ \words \times \words$. In \cite{KariSalo11}, it
is shown that the emptiness problem is undecidable even for a unary
alphabet.

\begin{figure}[ht]
  \centering
  \begin{tikzpicture}[scale=0.6, baseline=(current bounding box.west)]
    \tikzstyle{reddot} = [draw, shape=circle, radius=1pt, inner sep=1pt, fill=red]
  \draw[dashed] (0,0) grid (7,3);
  \draw (0,0) rectangle (7,3);
  \draw (0,0) -- (3,3) -- (6,0) -- (7,1);
  \node[reddot, label=right:{$(p,1)$}] (gcd1a) at (7,1) {};
  \node[reddot, label=above:{$(1,q)$}] (gcd1b) at (1,3) {};
  \draw[decorate, decoration={brace, amplitude = 5pt}] (7,0) -- (0,0) node[midway, below=6pt] (p) {$p$};
  \draw[decorate, decoration={brace, amplitude = 5pt}] (0,0) -- (0,3) node[midway, left=6pt] (q) {$q$};
  \end{tikzpicture}
  \quad
  \begin{tikzpicture}[scale=0.3, baseline = (current bounding box.west)]
  \draw[dashed] (0,0) grid (16, 16);
  \draw (0,0) rectangle (16,16);
  \draw (0,0) -- (16,8) -- (12,0) -- (16,2) -- (15,0);
  \draw[very thick] (15,0) rectangle (16,1);
  \draw[decorate, decoration={brace, amplitude = 5pt}] (16,0) -- (0,0) node[midway, below=6pt] (L) {$2^n$};
  \draw[decorate, decoration={brace, amplitude = 5pt}] (0,0) -- (0,16) node[midway, left=6pt] (l) {$2^n$};
  \end{tikzpicture}
  \caption{Euclidean algorithm (left) and Squares of size $2^n$ (right)}
  \label{coprime}
  \label{expsquare}
\end{figure}
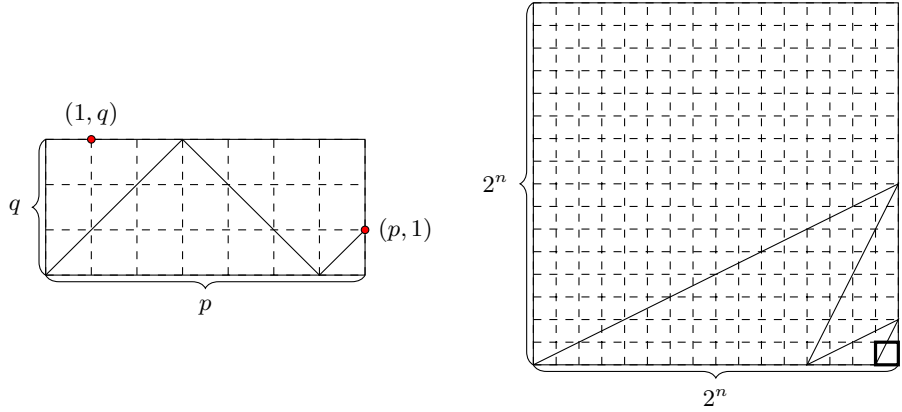

The following examples from~\cite{KariMoore04} show that picture automata
and \models are really expressive.  The relation $\Coprime = \{(a^p, a^q)
\mid p \wedge q = 1\}$ can be recognized by a deterministic automaton
implementing a variant of the Euclidean algorithm.  The automaton follows
diagonals until it reaches either one of $(p,1)$ or $(1,q)$ or one of
$(0,0)$, $(p,0)$, $(0,q)$ or $(p,q)$.  In the former case, it accepts and
in the latter case, it rejects.

A more sophisticated deterministic automaton can recognize the relation
$\{(2^n, 2^n) \mid n \in \N\}$ following the schema in
Figure~\ref{expsquare} (right).  The automaton moves with a ratio of $1/2$
as long as it is possible (\ie while the remaining length is divisible by
2), and it accepts if it reaches the bottom-right square.  An even more
sophisticated deterministic automaton can accept the relation $\{(2^{2^n},
2^{2^n}) \mid n \in \N\}$.

\section{Alternating \MODELS Are Not Closed under Complementation}
\label{proofAFAcompl}

Our main result is the following. Note that the case of unary alphabets is
still an open problem.

\begin{theorem} \label{AFAcompl}
  The class of relations recognized by \amodels is not closed under
  complementation as soon as the alphabet is not unary.
\end{theorem}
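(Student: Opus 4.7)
The plan is to exhibit a relation $\srel \subseteq \words \times \words$ over a non-unary alphabet $\alphabet$ such that $\srel \in \AFA$ while $\compl{\srel} \notin \AFA$, in the spirit of the Kari--Moore counter-example for picture languages but adapted to the two-tape setting. I would choose $\srel$ to encode pairs $(u,v)$ satisfying a global combinatorial constraint whose positive form decomposes neatly under alternation---into existentially guessed witnesses together with universally checked local consistencies---yet whose negation requires a form of global search that \amodels cannot perform. Because our coding is by pairs of words rather than by 2-dimensional arrays, the second coordinate carries only $|v|$ letters, not the full $|u|\times|v|$ grid available in the picture model; so what in the picture setting can be read off a single cell $(i,j)$ must here be reconstructed jointly from $u_i$ and $v_j$. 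This reconstruction is possible precisely because $\size{\alphabet}\ge 2$, which is where the non-unary hypothesis enters the argument.

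The first step is to fix $\srel$ and to construct an \amodel $\atm$ accepting it. The machine would use existential states to navigate the tapes and to guess a would-be witness position (or piece of a witness), and universal states to fork into finitely many sub-runs, each verifying one local consistency between the two heads at specific positions. This is where the ``extra arguments'' relative to the picture-automata proof are needed: each geometric move of a $4$-way automaton on a picture must here be simulated by coordinated motions of two one-dimensional heads, and the absence of a true 2D memory must be compensated by extra existential guesses that are then universally cross-checked.

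The main obstacle is the second step: proving directly that no \amodel can recognize $\compl{\srel}$. The plan is to assume, toward a contradiction, that some alternating automaton $\atm$ accepts $\compl{\srel}$ and to analyze its accepting run trees on carefully chosen inputs in $\compl{\srel}$. The central tool should be a pumping / Ramsey-type argument on the configurations labeling the nodes of an accepting run tree: on sufficiently long inputs, repetitions of control state together with head positions equivalent modulo a period are unavoidable, and splicing the run tree at such repetitions produces an accepting run of $\atm$ on a modified input. The heart of the argument---and the hardest step---will be to design $\srel$ so that the modified input provably belongs to $\srel$, contradicting rejection by $\atm$. Controlling the interplay between universal branching, two-way motion and the two independent tapes when performing this splicing, and ensuring that the positive instance thereby obtained genuinely lies in $\srel$, is what the direct proof has to handle without relying on a translation from picture automata.
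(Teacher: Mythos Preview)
Your high-level plan---exhibit a relation $\srel$ with $\srel\in\AFA$ and $\compl{\srel}\notin\AFA$, inspired by Kari--Moore---matches the paper. But two things are missing or off.

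First, the paper commits to a concrete $\srel$, namely $\perm=\{(\some^n,\,c_1\colsep\cdots\colsep c_n\,\separator\,c_1\colsep\cdots\colsep c_n)\mid (c_1,\dots,c_n)\in\permutations{n}\}$: the first tape is a pure unary counter, and all the structured data sits on the second tape using the symbols $0,1,\colsep,\separator$. So the non-unary hypothesis is used to encode a permutation matrix (twice, separated by $\separator$) on one tape, not to ``reconstruct a grid from $u_i$ and $v_j$'' as you suggest. Without a concrete $\srel$ you cannot carry out either half of the argument; in particular the $\AFA$ membership proof relies on the specific inversion trick for permutation equality from \cite{KariMoore01}, and your proposal does not identify any analogous mechanism.

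Second, and more seriously, your plan for $\compl{\srel}\notin\AFA$ is the wrong tool. You propose to take inputs in $\compl{\srel}$, look at their \emph{accepting} run trees, find repeated configurations and splice. The obstacle is that in an alternating run tree all branches read the \emph{same} input; a pumping or Ramsey argument along one branch does not give a cut that is consistent across all branches, so you cannot coherently modify the input and keep the whole tree accepting. The paper avoids this entirely: it works on the other side, taking inputs $(\some^n,\sigma\separator\sigma)\in\perm$ that are \emph{rejected} by the hypothetical automaton, and encodes rejection as a positional winning strategy for the universal player (a ``proof of reject''), i.e.\ a tuple $(\tau_j)_{0\le j\le 2n+2}$ of maps from positions and states to $\{\bot,\top\}$ or to a chosen transition. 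The key step is then a \emph{counting} (crossing-sequence) argument, not pumping: the central slice $(\tau_n,\tau_{n+1},\tau_{n+2})$ around the $\separator$ has at most $((|\transrel|+1)^{|\states|})^{3n}$ values, which is eventually less than $n!$, so two distinct permutations $\sigma\neq\sigma'$ share the same central slice, and gluing the left half of one proof with the right half of the other yields a proof of reject for $(\some^n,\sigma\separator\sigma')\notin\perm$, a contradiction. Your proposal contains neither the game/strategy reformulation nor the counting bound, and the splicing you describe would not survive universal branching.
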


In this proof we denote by $\AFA$ the class of relations recognized by
\amodels, and by $\coAFA$ the class of relations whose complement is
recognized by \amodels. Hence, we aim to prove that $\AFA$ and~$\coAFA$
are distinct, and this is achieved by defining a well-chosen relation
$\perm$ and show that $\perm\in \AFA$ but $\perm\notin \coAFA$. The first
point is proved by giving an explicit \amodel recognizing $\perm$
(Lemma~\ref{PisinAFA}); the second point is proved by contradiction using a
game-theoretic approach combined with a combinatorial argument
(Lemma~\ref{PnotincoAFA}).

The proof is inspired by the one in \cite{KariMoore01} establishing that
the set of picture languages recognized by alternating 4-way automata is
not closed under complementation. One difference is that the
counter-example they exhibit cannot be used in our setting as the images
that are considered are not product of words (as defined in
Section~\ref{4wayAutomata}). Hence, even if the general idea ---~coding a
permutation to build a counter-example~--- is similar to the one in
\cite{KariMoore01}, our coding is different and therefore new ideas are
needed to prove that we can accept this relation with an \amodel. The proof
that the complement is not recognizable by an \amodel, even if it shares
ideas with the one in \cite{KariMoore01}, is somehow more direct as it does
not appeal to an intermediate class.

In the following, if $c_1, \dots, c_n$ are $n$ words of length $n$ over
alphabet $\{0,1\}$ we identify the tuple $(c_1,\dots,c_n)$ with the
$n\times n$ matrix whose $i$-th column is $c_i$. We define the relation
$\perm$ by $$\perm = \left\{(\some^n, c_1 \colsep \cdots \colsep c_n
  \separator c_1 \colsep \cdots \colsep c_n) \mid n \in \N, (c_1, \dots,
  c_n) \in \permutations{n} \right\}$$
where $\permutations{n}$ denotes the
set of all ($n\times n$ $\{0,1\}$-matrix coding) permutations over $\{1,
\dots, n\}$. See Figure~\ref{identicalPermutations} for an example.

\begin{figure}[ht]
    \centering
$\def\arraystretch{1.5}
\begin{array}{|ccccccc|}
   \hline
   0 & 0 & 1 & \separator & 0 & 0 & 1 \\
   1 & 0 & 0 & \separator & 1 & 0 & 0 \\
   0 & 1 & 0 & \separator & 0 & 1 & 0 \\
   \hline
\end{array}$ \hspace{3em}
$(\some^3, 010\colsep 001 \colsep 100 \separator 010\colsep 001 \colsep 100)$
\caption{Two identical permutations separated by $\separator$s and the corresponding encoding}
    \label{identicalPermutations}
  \end{figure}%

As announced, we start by showing that $\perm \in \AFA$.

\begin{lemma}
  \label{PisinAFA}
There exists an \amodel recognizing $\perm$.
\end{lemma}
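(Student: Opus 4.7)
The plan is to construct an alternating two-way two-tape automaton $\atm$ recognizing $\perm$. Given $(u,v)$ as input, the root of an accepting run universally splits the verification into three sub-checks: (i) the format, \ie $u = \some^n$ and $v = c_1\colsep\cdots\colsep c_n \separator c_1\colsep\cdots\colsep c_n$ with each $c_k \in \{0,1\}^n$; (ii) the first half $w$ is a permutation matrix; (iii) the two halves coincide.

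Check~(i) is essentially deterministic: a lockstep sweep of both heads, using $\size{u}$ to enforce both the column length and the column count on $v$. Check~(ii) uses nested alternation. For \emph{each column has exactly one $1$}, universally sweep the tape-2 head between the $\colsep$ markers, and at each column scan its $n$ bits keeping a $\{0,1,{\geq 2}\}$ counter in the state. For \emph{each row has exactly one $1$}, universally fix a row by positioning the tape-1 head, then universally sweep through the columns on tape~2; at each column use a \emph{sync-move} sub-routine (advance tape~2 while retracting tape~1 to $\bmark$) to pinpoint bit~$(i,c)$, whose value is folded into the state counter.

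The delicate part is check~(iii). Universally choose the row $i$ (by walking tape~1) and the column $j$ (by walking tape~2, stopping on the $\colsep$ preceding column~$j$), existentially guess the bit value $v \in \{0,1\}$, and universally branch into two sub-checks: verify that bit~$(i,j)$ of $w$ equals $v$, and that bit~$(i,j)$ of $w'$ equals $v$. The first is a single sync move. The second requires placing tape~2 on bit~$(i,j)$ of $w'$, which sits exactly $L+1 = n(n+1)$ positions to the right of bit~$(i,j)$ of $w$. This is done by first performing the sync move to reach bit~$(i,j)$ of $w$ (clearing tape~1 to $\bmark$), then iterating $n$ times a \emph{column-shift} that advances tape~2 by $n+1$ positions, taking it from bit~$(i,c)$ to bit~$(i,c+1)$ (possibly crossing $\separator$); one column-shift is implemented by scanning right to the next separator, stepping past it, and using tape~1 to realign on offset~$i$ in the next column.

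The principal obstacle is the nested use of tape~1 in this iterated column-shift: it must simultaneously be the inner scratchpad for the offset realignment and the outer counter for the $n$ iterations. The resolution is to engineer each column-shift so that tape~1 ends at a predictable position relative to where it started, letting the counter live on tape~1 between iterations; this works because every column-shift crosses exactly one separator on tape~2, providing an implicit synchronization between the outer counter and the column structure. After $n$ iterations, tape~2 sits on bit~$(i,j)$ of $w'$, where the bit is read and compared to $v$. Every condition defining $\perm$ is thereby reduced to a finite alternation of existential and universal branching, so $\perm \in \AFA$.
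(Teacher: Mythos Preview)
Your checks (i) and (ii) are fine and essentially match the paper's. The gap is in check~(iii), precisely at the point you flag as ``the principal obstacle''.

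To reach bit $(i,j)$ of the second half you need to move tape~2 right by $n(n+1)$. Your column-shift realigns to offset~$i$ by driving tape~1 to a boundary marker: that is the only way a two-way head can measure a distance, since only $\bmark$ and $\emark$ are distinguishable. But then tape~1 ends at position $0$ or $n+1$ \emph{regardless of where it started}, so no outer-counter value survives the shift. Conversely, if you insist that tape~1 end at a fixed offset from its starting position (so the outer counter is preserved), it cannot touch a boundary during the shift, and then it cannot measure the $i-1$ steps needed for the realignment. The two uses are genuinely incompatible, and the separator-crossing observation does not help: knowing that each shift crosses exactly one separator still leaves you needing to compare the number of shifts to~$n$, which again requires tape~1. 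I do not see how to make a direct ``jump to column~$j$ on the other side'' work with a single length-$n$ counter.

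The paper avoids this obstacle by a different idea for~(iii). Having checked that both halves encode permutations, it uses that $\sigma_1 \neq \sigma_2$ iff there is an \emph{inversion}: indices $i \neq i'$ with $\sigma_1(i) < \sigma_1(i')$ but $\sigma_2(i) > \sigma_2(i')$. The automaton universally explores moves of the form ``from a $1$, go right to some column on the same side, find its $1$, then go to the $1$ in the same row on the \emph{other} side of~$\separator$, and repeat''; this loops forever iff an inversion exists, and otherwise eventually hits~$\separator$ and accepts. Crucially, the only cross-$\separator$ step is ``find the $1$ in the same row on the other side'', which is a \emph{search} over all columns (each column tested using the row index stored on tape~1, then restored), not a jump to a prescribed column index. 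Hence tape~1 only ever needs to hold one value at a time, and the conflict you encountered never arises.
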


\begin{proof}
  We describe below how to check whether a pair $(\some^n, c) \in
  \wordsof{\some} \times \wordsof{\{0,1,\separator,\colsep\}}$ is in $\perm$. As the
  $\some^n$ word is used only to store position we sometimes refer to
  the content of the first tape as the \emph{counter}.
  
  To decide if $(a^n,c)\in\perm$, we have to check two things. The first
  one is whether $c$ is of the form $\sigma_1 \separator \sigma_2$, where $\sigma_1,
  \sigma_2$ are the encodings of two permutations of dimension $n$ and 
  the second one is whether $\sigma_1 = \sigma_2$.
  
  For the first step, we only explain how to check the property for
  $\sigma_1$ as the case of $\sigma_2$ is checked in the same way. Assume
  $\sigma_1 = c_{0} \colsep c_{1} \colsep \cdots \colsep c_{n}$ (checking
  that there are $n$ block is easy). We first need to check that each $c_i$
  (\ie each column) has length $n$ and contains exactly one $1$: this is
  easy (the length condition being checked thanks to the first tape). We
  then need to check that for each $k=1,\ldots,n$ there is exactly one
  $c_i$ whose $k$-th letter is a $1$ (\ie each row contains exactly one
  $1$). This property is checked for each $k=1,\dots,n$ in increasing
  order starting from $k=1$, and at the beginning of step $k$ the head in
  the counter in the first tape is at position $k$ and the head in the
  second tape is at the beginning of $\sigma_1$: now going left on the
  counter and right on the second tape at the same speed the $k$-th symbol
  of $c_1$ is reached (just before the left marker is read on the first
  tape); then the automaton goes back to the beginning of $c_1$ and the
  counter is increased back to $k$; finally the second head goes to the
  beginning of $c_2$ (going to the right until reading a $\#$), then $c_2$
  is processed in the same way, and so on until reading a $\separator$ meaning
  that $c_n$ was processed and that one can go back to the beginning of the
  second tape and start again but now for $k+1$.
  
  We are now left with checking that $\sigma_1=\sigma_2$ knowing that both
  $\sigma_1$ and $\sigma_2$ encode a permutation. For that we use the same
  approach as the one in \cite[Lemma~2, Condition~(*)]{KariMoore01}. More
  precisely, we use the following property: two permutations are different
  \ssi there exists an inversion \ie there exists $i \neq i', \sigma_1(i) <
  \sigma_1(i')$ but $\sigma_2(i) > \sigma_2(i')$. This can be checked by
  trying all possible way of moving in the associated picture (as
  illustrated in Figure~\ref{identicalPermutations}) in the following
  fashion (see Figure~\ref{inversion}): from a $1$ move right to another
  column but stay on the same side of the column of $\separator$s. Find the $1$ on
  that column. Then move to the other 1 that is on the same row, on the
  opposite side of the column of $\separator$s, and repeat: then, the machine enters
  an infinite loop \ssi it finds an inversion~\cite{KariMoore01}.

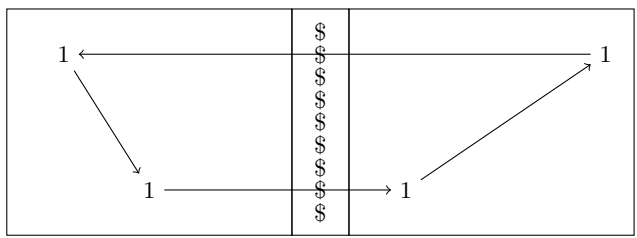
\begin{figure}[ht]
 \centering
\begin{tikzpicture}[xscale=0.75, yscale=0.6]
  \draw (0,0) rectangle (5,5);       
  \draw (5,5) rectangle (6,0);       
  \foreach \j in {0.5, 1, ..., 4.5}  
  \node at (5.5, \j) {$\separator$};
  \draw (6,0) rectangle (11,5);      
  
  \node (i1)  at (1, 4) {$1$};
  \node (i'1) at (2.5, 1) {$1$};
  \node (i2)  at (10.5, 4) {$1$};
  \node (i'2) at (7, 1) {$1$};
  
  \draw[->] (i1) edge (i'1) (i'1) edge (i'2) (i'2) edge (i2) (i2) edge (i1);

\end{tikzpicture}
\caption{Loop corresponding to an inversion}
  \label{inversion}
\end{figure}

The first step (finding a $1$) is easy and done using universal states
(which ensures that we do the check for all possible $1$). Moving to
another column on the right is easy: simply use a universal state and keep
moving the head to the right passing one or more $\#$ but stopping before
reading the $\separator$. Finding the $1$ on the column is easy (look for it in
between the left and right $\#$). Finally, moving to the other 1 on the
same row on the opposite side of the $\separator$ is achieved by using the counter
to keep in memory the row number $j$, and then go to the other side and
check every column until finding the one with a $1$ on the $j$-th row (this
is done with the same trick used previously to check that each row contains
a single $1$). 
If at some point when moving to a column to the right the automaton hit a $\separator$ then it goes to a final state. The automaton accepts the desired language for the following reason: if an inversion exists, it is found (thanks to universal choices) and leads to a looping hence, rejecting, computation; if not, whatever the universal choices are the automaton is getting closer and closer to the $\separator$ and eventually hit it thus  reaching a final state.

Therefore, we built an \amodel recognizing $\perm$. More generally, the ideas used in the first part of the proof can be used to prove the following: for any picture language $L$ recognizable by a 4-way automaton, the relation $\{(a^{\max(n,m)},c_1\#\cdots\#c_m)\mid (c_1,\dots,c_m) \text{ is an $n\times m$ picture in $L$}\}$ is recognizable by a \model.
\qed\end{proof}

We now show that the complement of $\perm$ cannot be recognized by an \amodel.

\begin{lemma}\label{PnotincoAFA}
  Relation $\perm\notin\coAFA$.
\end{lemma}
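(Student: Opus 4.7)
The plan is a proof by contradiction using a game-theoretic interpretation of alternating automata combined with a pigeonhole argument over permutations. Suppose $\compl{\perm}$ is recognized by an \amodel $\atm$ with state set $Q$ of size $k$. Interpret each run tree as a two-player game where Eve resolves existential choices and Adam resolves universal ones, Eve winning iff some branch reaches an accepting configuration. By assumption, for every $n$ and every pair $(\sigma_1, \sigma_2)$ of permutations of $\{1, \dots, n\}$, Eve has a winning strategy on $(\some^n, \sigma_1 \separator \sigma_2)$ iff $\sigma_1 \neq \sigma_2$.

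Fix $n$ large. The game naturally decomposes across the $\separator$ boundary of tape~$2$: the only interaction between the left half (containing $\sigma_1$) and the right half (containing $\sigma_2$) occurs when the tape-$2$ head crosses $\separator$, and each such crossing is characterized by a pair (state, tape-$1$ position), giving only $k(n+1)$ possible crossing labels. To each $\sigma_2$ I would associate a game-theoretic \emph{type} $T(\sigma_2)$ summarizing the alternating subgame played on the right half: for every possible entry configuration at $\separator$, a game-formula describing what Eve can guarantee --- an internal win, or an exit crossing at a specified (state, tape-$1$ position) pair. By construction, $T(\sigma_2)$ lives in a finite set whose size depends only on $k$ and $n$, not on the particular $\sigma_2$.

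The pigeonhole step is then: if the number of achievable types $T(\sigma_2)$ over all $n!$ permutations of dimension $n$ is strictly less than $n!$, there exist distinct permutations $\sigma \neq \sigma'$ with $T(\sigma) = T(\sigma')$. Since $\sigma \neq \sigma'$, Eve has a winning strategy $W$ on $(\some^n, \sigma \separator \sigma')$; because $T(\sigma) = T(\sigma')$, the portion of $W$ played on the right half transplants to an equivalently winning strategy against the right-half content $\sigma$, producing a winning strategy for Eve on $(\some^n, \sigma \separator \sigma)$. This contradicts $(\some^n, \sigma \separator \sigma) \in \perm$ and concludes the proof.

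The main obstacle is establishing the counting bound: one must show that strictly fewer than $n!$ distinct types $T(\sigma_2)$ are achievable. A generic bound on arbitrary alternating subgame types is doubly-exponential in $n$, much larger than $n! = 2^{\Theta(n \log n)}$. The argument must therefore exploit structural features specific to this setting: the first tape is unary, so the tape-$1$ position is merely a counter; and permutation matrices have a very rigid combinatorial structure (exactly one $1$ per row and per column). Following the spirit of \cite{KariMoore01}, I would refine the type so that it only records the cross-column behaviour of the automaton, bound the per-column type space polynomially in $n$, and expect an overall type count in the range of $2^{O(n)}$, which is comfortably dominated by $n!$ for large $n$.
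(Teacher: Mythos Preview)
Your overall architecture---contradiction, game semantics for alternation, a cut at the $\separator$ column, pigeonhole against $n!$, then a cut-and-paste of strategies---is exactly the paper's. The difference lies in which player you focus on and, consequently, in how the counting bound is obtained.

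You work on the \emph{accepting} side: for $\sigma_1\neq\sigma_2$ Eve wins, and you want to associate to each right half $\sigma_2$ a type $T(\sigma_2)$ describing the alternating subgame to the right of $\separator$. As you yourself note, the natural type space for an alternating subgame at a boundary with $O(n)$ entry/exit labels is doubly exponential in $n$, so you still need an extra idea; the sketch ``refine the type to cross-column behaviour, expect $2^{O(n)}$'' is where the actual work would be, and it is not carried out.

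The paper sidesteps this difficulty by working on the \emph{rejecting} side. For each input $(a^n,\sigma\separator\sigma)\in\perm$ (hence rejected by $\atm$), the universal player has a \emph{positional} winning strategy in the associated safety game. The paper packages this strategy together with the set of reachable configurations into a ``proof of reject'': a labeling $\tau_{i,j}$ of every cell $(i,j)$ by an element of a fixed finite set of size at most $|\Delta|+1$ per state. The slice of this labeling at the three central columns around $\separator$ therefore takes at most $((|\Delta|+1)^{|Q|})^{3(n+2)}=2^{O(n)}$ values---no subgame-type machinery needed. Pigeonhole then yields $\sigma\neq\sigma'$ whose proofs of reject agree on the central slice, and gluing the left half of one proof to the right half of the other produces a proof of reject for $(a^n,\sigma\separator\sigma')\notin\perm$, the desired contradiction.

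In short: your plan is correct but leaves the hard step (the $2^{O(n)}$ bound) unproved; the paper obtains that bound for free by switching to Adam's positional strategy, which is a per-cell object and hence trivially counted at any fixed-width slice.
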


\begin{proof}
  The proof is by contradiction assuming the existence of an \amodel
  $\atm=\left(\states, \alphabet, \transrel,\init, \final \right)$ such
  that a pair $(u,v)$ is accepted by $\atm$ \ssi $(u,v)\notin\perm$. Hence,
  a pair $(a^n,c)$ is rejected by $\atm$ \ssi if belongs to $\perm$.
        
  We now rephrase the fact that $(a^n,c)$ is rejected by $\atm$, to that end 
  we use a game-theoretic flavoured argument. Indeed, another way
  of thinking about a run of an alternating automaton is as a two-player games
  where the existential player is in charge of choosing the transition in
  existential configurations while the universal player takes care of
  universal configurations. The winning condition for the existential
  player is that a final configuration is eventually reached. In particular
  $\atm$ rejects its input \ssi the universal player has a winning
  strategy, \ie a playing policy ensuring no final state is
  ever reached whatever the existential player does. Moreover, this
  strategy can be chosen to be positional, \ie only depending on the
  configuration, not on how it was reached (see \eg
  \cite{GradelThomasWilke02}).
        
  Fix an input of the form $(a^n,\sigma \separator\sigma)\in\perm$. A
  strategy for the universal player and the proof that it is a winning one
  can be described as follows. For each position
  $0\leq j\leq 2n+2$ on the second tape (including the markers) we
  associate a $n+2$ tuple $\tau_j=(\tau_{0,j},\dots,\tau_{n+1,j})$ of
  functions describing what configurations can appear when the heads are at
  position $(i,j)$ and for universal ones what the strategy of the universal
  player is. Hence for each $0\leq i\leq n+1$, $\tau_{i,j}$ is a map from
  $Q$ such that:
  \begin{itemize}
  \item if $q\in Q_\exists$, $\tau_{i,j}(q)$ is either $\bot$ (meaning
    $(q,i,j)$ is not reachable) or $\top$ (reachable)
  \item if $q\in Q_\forall$, $\tau_{i,j}(q)$ is either $\bot$ (not
    reachable) or a transition in $\Delta$ starting with $(q,x,y)$ where
    $x$ (\resp $y$) is the letter at position $i$ (\resp $j$) in the first
    (\resp second) tape, including markers. The latter case means that
    configuration $(q,i,j)$ can appear and gives the corresponding move in
    the universal player strategy.
  \end{itemize}
  
  Such an object $\tau_0,\dots,\tau_{2n+2}$ is a \emph{proof of reject} if
  it satisfies the following four conditions.
  \begin{enumerate}
  \item For each initial state $q\in I$, $\tau_{0,0}(q)\neq \bot$ (all
    initial configurations are allowed).
  \item For each final state $q\in F$ and each $i,j$, $\tau_{i,j}(q)=\bot$ (no final
    configuration can be reached).
  \item For each existential state $q$ and each $i,j$ such that
    $\tau_{i,j}(q)=\top$, for each possible successor $(q',i',j')$ of
    $(q,i,j)$ one has that $\tau_{i',j'}(q')\neq\bot$ (all successors of
    allowed existential configurations are allowed).
  \item For each universal state $q$ and each $i,j$ such that
    $\tau_{i,j}(q)\neq\bot$, if one denotes by $(q',i',j')$ the
    configuration reached from $(q,i,j)$ applying transition
    $\tau_{i,j}(q)$ then $\tau_{i',j'}(q')\neq\bot$ (an allowed universal
    configurations has its successor described by the strategy allowed).
  \end{enumerate}
  
  It is then standard to notice that $\atm$ has no accepting run over
  $(a^n,\sigma \separator\sigma)$ \ssi there is a proof of reject for it.
  
  Now, for a fixed $n$, consider the number of possible values for the
  central part $(\tau_n,\tau_{n+1},\tau_{n+2})$ of a proof: it is smaller
  than $((|\Delta|+1)^{|Q|})^{3n}$. Hence, for $n$ large enough it is
  smaller than $n!$. For such $n$ it means that there are two distinct
  permutations $\sigma\neq \sigma'$ such that the proofs of reject for
  $\sigma \separator\sigma$ and for $\sigma' \separator\sigma'$ coincide on
  their central part: hence, they can be combined (glue the left part of
  the first proof with the right part of the second), leading a proof of
  reject for $\sigma \separator \sigma'$. But this leads a contradiction as
  $\sigma \separator\sigma'\notin \perm$ and therefore is accepted by
  $\atm$.
\qed\end{proof}

\section{Union, Intersection and Composition}

In this section, we study the closure under union, intersection and
composition of \models.

\subsection{Closure under Union and Intersection}

Concerning closure under union and intersection we have the following picture.

\begin{lemma}
  Relations recognized by deterministic (\resp non-deterministic, \resp
  alternating) \models are closed under union and intersection.
\end{lemma}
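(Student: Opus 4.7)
The plan is to handle all three cases by standard constructions combining two given automata $\atm_1$ and $\atm_2$ on disjoint state sets.

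For the alternating case, both closures are immediate. Introduce a fresh initial state $q_0$ whose only outgoing transitions are stay-stay moves (for every pair of input letters read) to the initial states of $\atm_1$ and $\atm_2$. Declaring $q_0$ existential yields an automaton for $\rela{\atm_1}\cup\rela{\atm_2}$; declaring it universal yields one for $\rela{\atm_1}\cap\rela{\atm_2}$. Since the tapes are read-only, both sub-automata see the same input $(u,v)$, and the alternating semantics directly implement the desired Boolean operation.

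For the non-deterministic case, union uses the same construction, with non-deterministic choice at $q_0$. Intersection is handled by a sequential simulation: run $\atm_1$, and whenever $\atm_1$ would enter a state in $\final_1$, redirect the computation into a short reset phase that walks both heads back to position $0$ (by moving left deterministically until each $\bmark$ marker is read) and then into the initial state of $\atm_2$. For the deterministic case, the very same sequential construction preserves determinism and gives closure under intersection: since $\atm_1$'s run is uniquely determined, either it reaches a final state and we continue deterministically into $\atm_2$, or it does not and we reject. Closure under union in the deterministic case is obtained symmetrically: we complete the transition function of $\atm_1$ so that halting non-final configurations (those with no outgoing transition) instead trigger the reset phase and start $\atm_2$.

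The main obstacle I anticipate is the deterministic union case, where the unique run of $\atm_1$ on $(u,v)$ may be infinite: the combined automaton then never switches to $\atm_2$, incorrectly rejecting whenever $\atm_2$ would accept. To bypass this, $\atm_1$ should first be replaced by an equivalent always-halting deterministic automaton, exploiting the fact that on any input $(u,v)$ the configuration space has size $|\states_1|(|u|+2)(|v|+2)$: a step counter of this order, maintained via auxiliary states together with the two-dimensional head positions, detects any loop and forces termination without changing the accepted relation.
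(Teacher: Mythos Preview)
Your treatment of the alternating case (a fresh branching initial state, existential for union and universal for intersection), of non-deterministic union, and of intersection in both the non-deterministic and deterministic cases via sequential simulation with a reset phase, coincides with the paper's proof sketch.

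For deterministic union you correctly isolate the real difficulty: if the unique run of $\atm_1$ on $(u,v)$ is infinite, control is never handed to $\atm_2$. Your proposed remedy, however, does not go through. A step counter up to $|\states_1|(|u|+2)(|v|+2)$ cannot be ``maintained via auxiliary states together with the two-dimensional head positions'' while $\atm_1$ is being simulated: the two head positions, together with finitely many control states, are precisely the resources needed to encode \emph{one} configuration of $\atm_1$, and they are already occupied by the simulation, since the heads must sit at $\atm_1$'s current positions to read the correct letters. There is nowhere to store a second quantity of the same order of magnitude; moving the heads away to represent the counter loses $\atm_1$'s positions, which cannot be parked in a bounded state set. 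Information-theoretically, the simulating automaton has only $O(|u|\cdot|v|)$ configurations, while faithfully tracking both $\atm_1$'s configuration and an independent step count would require $\Omega((|u|\cdot|v|)^2)$ distinguishable global states.

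The paper resolves this by invoking Sipser's technique~\cite{Sipser80}: one replaces $\atm_1$ by an equivalent deterministic automaton that never loops, obtained by a depth-first backward exploration of the reverse computation graph from halting configurations. Because $\atm_1$ is deterministic, this reverse graph is a forest, and a DFS of a forest can be carried out while storing only the current node (here, a single configuration of $\atm_1$) plus a constant amount of bookkeeping in the control state. Hence the two heads suffice, and once $\atm_1$ is guaranteed to halt, the sequential construction you describe for union becomes correct.
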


\begin{proof}[sketch]
  For non-deterministic automata union is for free; intersection is simply
  obtained by simulating the first automaton, and if it reaches an
  accepting state, simulating the second automaton. For alternating
  automata both closures are for free. For deterministic automata,
  intersection can be obtained as in the non-deterministic case. Concerning
  union, one can no longer use non-determinism and hence, one needs to
  simulate successively the two automata.  However this does not work in
  general as a \model can reject by entering an infinite loop. Hence, one
  must first get rid of such phenomenon: this can be achieved thanks to a
  method due to Sipser~\cite{Sipser80}, which ensures that a deterministic
  \model never rejects by looping.
\qed\end{proof}

Remark that preventing deterministic \models from rejecting by looping can
be used to prove that the class of relations they recognize is also closed under
complementation, and therefore forms a Boolean algebra.

\subsection{Non-closure under Composition}

In this section, we prove that the class of relations recognized by \models
is not closed under composition, even in the deterministic case.

\begin{theorem} \label{theo:composition}
  Relations recognized by deterministic (\resp non-deterministic, \resp
  alternating) \models are not closed under composition.
\end{theorem}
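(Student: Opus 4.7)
The plan is to exhibit a single pair of relations $R_1, R_2$, both recognised by \dmodels (hence also by non-deterministic and alternating ones), whose composition $R_2 \circ R_1$ escapes $\AFA$. Since $\DFA \subseteq \NFA \subseteq \AFA$, such a witness simultaneously refutes closure under composition for all three classes. I arrange for $R_2 \circ R_1$ to coincide with $\perm_{\ne} = \{(\some^n, \sigma_1 \separator \sigma_2) \mid n \in \N,\ \sigma_1, \sigma_2 \in \permutations{n},\ \sigma_1 \ne \sigma_2\}$, and then derive $\perm_{\ne} \notin \AFA$ from Lemma~\ref{PnotincoAFA}.

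For the first building block I take $R_1 = \{(\some^n, \sigma_1 \separator \sigma_2) \mid n \in \N,\ \sigma_1, \sigma_2 \in \permutations{n}\}$, which differs from $\perm$ only in that no equality is required between $\sigma_1$ and $\sigma_2$. A \dmodel for $R_1$ is obtained by running, twice in sequence, the deterministic procedure from the first half of the proof of Lemma~\ref{PisinAFA} that verifies a block to be a valid permutation encoding of the size stored in the counter $\some^n$; only the permutation-equality step, which needed alternation, is dropped. For the second block I take $R_2 = \{(u,u) \mid u = x \separator y \mand x \ne y\}$. A \dmodel for $R_2$ performs a synchronous left-to-right scan that both certifies tape equality and locates the common position of $\separator$, followed by a second scan in which head~$1$ visits the prefix $x$ on the first tape while head~$2$ visits the suffix $y$ on the second tape (they carry the same word once the first phase has succeeded); any letter mismatch, or any desynchronisation of the end markers, signals $x \ne y$.

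Unfolding the composition, $(\some^n, v) \in R_2 \circ R_1$ requires an intermediate $w = \sigma_1 \separator \sigma_2$ with $\sigma_i \in \permutations{n}$ and $w = v$, with the two halves of $w$ disagreeing: this is exactly $\perm_{\ne}$. To rule out $\perm_{\ne} \in \AFA$, I decompose $\compl{\perm} = \perm_{\ne} \cup W$, where $W$ gathers all pairs $(u,v)$ not of the right shape, i.e.\ $u \notin \wordsof{\some}$, or $v$ does not split as $\sigma_1 \separator \sigma_2$ with both halves valid $\size{u}$-permutations. Because the right-form predicate is itself in $\DFA$ (again the first half of the proof of Lemma~\ref{PisinAFA}) and $\DFA$ is closed under complementation (as noted at the end of the union/intersection subsection), one gets $W \in \DFA \subseteq \AFA$. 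The assumption $\perm_{\ne} \in \AFA$ would then yield $\compl{\perm} = \perm_{\ne} \cup W \in \AFA$ by union closure, contradicting Lemma~\ref{PnotincoAFA}.

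The main technical care is concentrated in the design of $R_2$: since the two words to be compared live on the same tape, the naive approach of letting each head scan one of them does not work directly. The redirection above---perform the tape-equality sweep first, then compare $x$ on tape~$1$ against $y$ on tape~$2$---converts a same-tape comparison into a cross-tape one, with the length-mismatch subcase handled by tracking which head hits its end marker first. Once this mechanical part is in place, the identity $\compl{\perm} = \perm_{\ne} \cup W$ and the closure properties already established reduce the non-closure under composition for all three classes to the complementation result of Theorem~\ref{AFAcompl}.
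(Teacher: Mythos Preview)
Your proof is correct and takes a genuinely different route from the paper's. The paper does \emph{not} reduce to the complementation result; instead it proves an elementary growth bound (Lemma~\ref{boundFuncs}): any function realised by a \nmodel with $k$ states satisfies $|f(u)| \le \binom{2|u|k}{|u|k+1}$, hence at most singly exponential in $|u|$ (with a doubly-exponential analogue stated for the alternating case). It then constructs a concrete function $\enum \in \DFA$ (Lemma~\ref{counting}) whose output on a word of length~$n$ is the list of all $2^n$ binary strings of length~$n$ and thus has exponential length; self-composition of $\enum$ then outruns the bound.

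Your approach is more economical once Theorem~\ref{AFAcompl} is in hand: you recycle Lemma~\ref{PnotincoAFA} together with the already-established Boolean closure properties, rather than proving two new technical lemmas, and a single witness disposes of all three classes at once. The paper's approach, by contrast, is logically independent of the complementation section and yields as a by-product a quantitative output-length bound for functional relations in $\NFA$ (resp.\ $\AFA$), which is of independent interest. One small point of hygiene in your write-up: $R_2$ should be restricted to words containing exactly one~$\separator$, otherwise the decomposition $u = x \separator y$ is ambiguous; this is harmless for the argument since every intermediate word produced by~$R_1$ has exactly one.
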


\begin{proof}
  The proof works the same way for all three models. One first establishes
  (see Lemma~\ref{boundFuncs} below) an elementary bound on the growth of
  the functions recognized by our model, and then exhibit (see
  Lemma~\ref{counting} below) a recognizable relation breaking this bound
  when self-composed.
\qed\end{proof}

We now give the lemmas used in the proof of Theorem~\ref{theo:composition}.
The following lemma bounds the growth of the functions, that is functional
relations recognized by \models (a similar statement can easily be obtained
for \amodels with a doubly exponential bound instead).

\begin{lemma}  \label{boundFuncs}
  If $\frel$ is a function recognized by a \model, then there
  exists $k \in \N$ such that for all $n \in \N$ and $ u \in \alphabet^n$,
  one has $\size{\frel(u)} \leq \binom{2nk}{nk + 1}$, where $\binom{n}{k}$
  denotes the binomial coefficient.
\end{lemma}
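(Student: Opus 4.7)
The plan is to apply a pumping argument on crossing sequences along the second tape, combined with a combinatorial count. Fix a \model $\atm = (\states, \alphabet, \transrel, \init, \final)$ recognizing $\frel$, and let $u \in \alphabet^n$, $v = \frel(u)$, $m = |v|$. Pick an accepting run $\rho$ on $(u, v)$ that visits no configuration twice (shortcutting between any two occurrences of the same configuration). A preliminary observation is that the second-tape head must visit every cell $0, \dots, m+1$ during $\rho$: otherwise, extending $v$ beyond the last visited cell would yield further accepted pairs with first component $u$, contradicting functionality of $\frel$. For each $j \in \{0, \dots, m\}$, define the crossing profile $\pi_j = (v_j, v_{j+1}; \sigma_j)$ at the boundary between cells $j$ and $j+1$, where $\sigma_j$ lists, in chronological order, the triples $(d, q, i)$ recorded each time the second-tape head crosses this boundary in direction $d \in \{\goleft, \goright\}$ from state $q$ with first-tape position $i$. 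Non-repetition forces $\sigma_j$ to alternate directions starting with $\goright$ (since the second-tape head starts at $0$) and to have pairwise distinct (state, first-position) pairs per direction, hence $|\sigma_j| \leq 2|\states|(n+2)$.

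The key claim is that $\pi_0, \dots, \pi_m$ are pairwise distinct. Assuming $\pi_j = \pi_{j'}$ with $j < j'$, I would partition $\rho$ into segments according to the zone of the second tape in which the head lies ($[0,j]$, $[j+1,j']$, or $[j'+1,m+1]$), and then stitch the ``left'' segments with the ``right'' segments by pairing, for each $\ell$, the $\ell$-th $\goright$-crossing of $\sigma_j$ with the $\ell$-th $\goright$-crossing of $\sigma_{j'}$ (and analogously for $\goleft$-crossings). The inclusion of $v_j, v_{j+1}$ in the profile guarantees that letters at the paired boundaries match, so the transitions at corresponding crossings produce identical post-states; combined with the agreement of paired (state, first-position) pairs, this makes the stitched sequence a valid accepting run of $\atm$ on $(u, v')$ with $v' = v_1 \dots v_j v_{j'+1} \dots v_m$. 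Since $v' \neq v$, this contradicts functionality of $\frel$.

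Consequently $m + 1$ is bounded above by the number of distinct profiles, each of which consists of two letters of $\alphabet$ together with an alternating $\goright \goleft \goright \goleft \cdots$ pattern of pairwise distinct triples drawn from $\{\goleft, \goright\} \times \states \times \{0, \dots, n+1\}$, a set of size $2|\states|(n+2)$. A careful combinatorial count, absorbing $|\alphabet|$, $|\states|$, and polynomial overhead into an appropriate constant $k$, yields the target bound $\binom{2nk}{nk+1}$. The main obstacle I expect lies in this counting step: a naive enumeration of alternating sequences of distinct triples yields a factorial-type upper bound, and matching the specific binomial-coefficient form $\binom{2nk}{nk+1}$ requires a sharper encoding of profiles (for instance as lattice paths or as binary strings of length $2nk$ with a prescribed number of ones), together with a judicious choice of $k$ relative to $|\states|$. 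The pumping argument itself is standard once the profile definition and the non-repetition of $\rho$ have been fixed.
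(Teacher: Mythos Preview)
Your crossing-sequence pumping argument is sound, but the paper obtains the bound by a much shorter route that entirely avoids the combinatorial step you flag as the main obstacle. The paper fixes $u$ and hard-codes the first-tape head position into the control state, turning $\atm$ into an ordinary \emph{one-tape} two-way automaton with $nk$ states (where $k=|\states|$) that accepts exactly the singleton $\{\frel(u)\}$. It then invokes Kapoutsis's tight conversion bound from two-way to one-way NFA, which yields an equivalent NFA with $\binom{2nk}{nk+1}$ states; since the shortest word accepted by an $m$-state NFA has length at most $m$, the lemma follows immediately. Your profiles $(q,i)$ are precisely the states of this one-tape machine, and your pumping/cut-and-paste argument is essentially the Shepherdson--Kapoutsis crossing-sequence construction unfolded in the two-tape setting---so you are re-deriving the content of the cited result rather than using it as a black box.

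What this buys each side: your approach is self-contained and makes the mechanism (functionality forces distinct crossing profiles, hence a bound on $|v|$) explicit, but the naive count of alternating sequences of distinct pairs from a set of size $|\states|(n+2)$ is factorial in $n$, not of the form $\binom{2nk}{nk+1}\sim 4^{nk}$; no choice of constant $k$ absorbs that gap. To hit the stated binomial you would have to reproduce Kapoutsis's sharper encoding of crossing sequences, which is a nontrivial piece of work in its own right. The paper's reduction sidesteps this entirely. (A minor remark: the letters $v_j,v_{j+1}$ you include in the profile are not needed for the stitching to go through---the sequence of $(q,i)$ pairs alone suffices---so dropping them tightens your count, though still not to the target form.)
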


\begin{proof}
  Let $\atm$ be a \model recognizing a function $\frel$. Let $n
  \in \N$, and $u \in \alphabet^n$. Finally, let $k$ be the size of $\atm$.
  Fixing $u$, we easily build from $\atm$ a \omodel with $nk$ states
  accepting the singleton language $\{\frel(u)\}$.  Now, thanks to a result
  by Kapoutsis~\cite{Kapoutsis05} establishing that any $m$-state \omodel
  can be transformed into an equivalent equivalent \nfa with
  $\binom{2m}{m+1}$ states, we have that $\{\frel(u)\}$ is accepted by a
  \nfa with $\binom{2nk}{nk+1}$ states. As the shortest word accepted by an
  $m$-state \nfa has length at most $m$, it concludes the proof.
\qed\end{proof}

For $i \in \N$ and $n \geq \log_2(i)$ let $\base[n]{2}{i} \in \{0,1\}^n$ be
the writing of $i$ in base~$2$ on~$n$ bits, with the less significant bit
on the left (possibly padded with zeros). For example, $\base[5]{2}{6} =
01100$. We are now ready to exhibit a function with an exponential growth
and recognized by a \dmodel.

\begin{lemma}  \label{counting}
  The function $\enum = \{(w, \#w_0\#w_1\#\ldots\#w_{2^{\size{w}} - 1}\#)
  \mid w_i = \base[\size{w}]{2}{i}\}$ is recognizable by a \dmodel.
\end{lemma}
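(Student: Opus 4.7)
The plan is to describe a deterministic two-way two-tape automaton $\atm$ that, given $(w,v)$ with $|w|=n$, verifies directly that $v$ is exactly $\#w_0\#w_1\#\cdots\#w_{2^n-1}\#$. Throughout the construction the first tape, which carries $w$, is treated purely as a counter of range $\{0,\ldots,n+1\}$: its head serves both to delimit blocks of length $n$ on the second tape and to store a position $1 \leq j \leq n$ within such a block.

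A first easy subroutine verifies the syntactic shape of $v$: it must begin and end with $\#$, and between any two consecutive $\#$s the second head must read exactly $n$ symbols from $\{0,1\}$. The length condition for each block is enforced by synchronizing the first head (swept from $\bmark$ to $\emark$) with the second head reading the bits of the block, and by checking that the next $\#$ appears exactly when the first head reaches $\emark$. The core subroutine then checks, for two consecutive blocks $b_i$ and $b_{i+1}$, that $b_{i+1} = b_i + 1$ in binary written with the least significant bit on the left. Equivalently, if $j^{*}$ is the smallest index with $b_i[j^{*}]=0$, then $b_{i+1}[j] = 1-b_i[j]$ for $j \leq j^{*}$ and $b_{i+1}[j] = b_i[j]$ for $j > j^{*}$. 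We iterate $j = 1,2,\ldots,n$ using the first tape as an explicit counter: for each $j$ we navigate the second head in turn to $b_i[j]$ and then to $b_{i+1}[j]$ via a constant number of back-and-forth sweeps that always restore both heads to the same base configuration, and we keep in the finite control a single bit indicating whether the first $0$ of $b_i$ has already been encountered. The three cases above translate into deterministic transitions on the two bits read and that flag.

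Putting the subroutines together, $\atm$ scans the second tape, running the increment check on every consecutive pair of blocks, while additionally verifying that the very first block equals $0^n$ and that the block immediately preceding the final $\#$ equals $1^n$. Combined with the format check, this forces $v$ to contain exactly $2^n$ blocks listing the integers $0,1,\ldots,2^n-1$ in order. All transitions are deterministic, so $\enum$ is recognized by a \dmodel. The main obstacle is the increment subroutine: a one-pass comparison of $b_i$ and $b_{i+1}$ is impossible with a single head on the second tape, and the trick is to replace it by an iteration over $j$ that uses the first tape simultaneously as a counter and as an external memory for the current position, so that after each bit lookup both heads can be re-synchronized without loss of information.
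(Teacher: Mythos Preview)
Your construction is correct and rests on the same key trick as the paper: use the first tape as a position counter so that, for each index $j$, you can visit the $j$-th bit of two consecutive blocks on the second tape, feed the pair of bits to a finite-state increment check, and then resynchronize before moving to $j+1$. The only difference is in packaging. The paper factors this mechanism through an auxiliary lemma: for \emph{any} synchronous relation $\srel$ (one recognized by a one-way automaton reading both words in lockstep), the relation $\{(w,\#u\#v\#)\mid (u,v)\in\srel,\ |w|=|u|=|v|\}$ is recognized by a \dmodel; since binary increment with least significant bit first is synchronous, the consecutive-block check follows immediately, and iterating over all $\#$-separated pairs plus the $0^n/1^n$ boundary checks gives $\enum$. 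Your write-up unfolds this lemma for the specific case of increment, hard-wiring the carry flag into the control. The paper's abstraction is more reusable (any synchronous $\srel$ works), while your version is more self-contained and avoids introducing an extra notion; the underlying automaton is the same.
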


The proof is based on the concept of so-called synchronous relations. A
relation is \emph{synchronous} if it is recognized by a synchronous
transducer, which can be simulated by a \model whose two heads always move
simultaneously to the right and which accepts once the two inputs words are
entirely scanned. An example of such a relation is $\incrementS =
\{(\base[n]{2}{i},\base[n]{2}{i+1})\mid n\geq 1 \text{ and } 0\leq i\leq
2^n-1\}$ since an integer can deterministically be incremented starting
from its less significant digit.  Note that a synchronous transducer can
always be made deterministic as it is a classical automaton over the
product alphabet.

\begin{lemma} \label{syncrels}  
  Let $\srel$ be a synchronous relation. Then $\{(w, \# u\# v\#) \mid (u,
  v) \in \srel, \size{w} = \size{u} = \size{v} \}$ is recognized by a
  \dmodel.
\end{lemma}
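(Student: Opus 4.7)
The plan is to simulate a deterministic synchronous transducer $T$ for $\srel$, using the first tape $w$ as a position counter and the $\#$ markers on the second tape as landmarks. Since $\size{w}=\size{u}=\size{v}=n$, the first-tape head can be paired with the second-tape head to deterministically measure offsets of length $n$, with no extra counter needed in the control state.

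First, the \dmodel checks the format of its input, namely that the second tape reads $\#u\#v\#$ with exactly three $\#$'s and that $\size{w}=\size{u}=\size{v}$. Each length equality is verified deterministically by moving both heads in lockstep between corresponding markers and checking they reach the next marker at the same time, a standard two-way trick. The automaton then runs a main loop simulating $T$ step by step, maintaining the invariant that at the start of iteration $i$, head~1 sits on $w_i$ and head~2 sits on the middle $\#$. Inside a single iteration there are three phases. First, the automaton moves head~1 right and head~2 left in lockstep until head~1 reads $\emark$, which places head~2 on $u_i$; it records $u_i$ in its control state and returns both heads to their invariant positions by the reverse lockstep move. Second, it moves head~1 left and head~2 right in lockstep until head~1 reads $\bmark$, which places head~2 on $v_i$; it then reads $v_i$ and applies the transducer transition $\delta_T(q,u_i,v_i)\to q'$. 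Third, it reverses the second-phase move to bring head~2 back to the middle $\#$ while returning head~1 to $w_i$, and finally advances head~1 one step to the right so that head~1 now reads $w_{i+1}$. Acceptance is declared when head~1 would step past $\emark$ with the transducer in a final state.

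The correctness of the navigation in each phase hinges on $\size{w}=\size{u}=\size{v}$: in phase one, the distance from head~1 to $\emark$ equals the distance from head~2 to $u_i$ (both $n+1-i$); in phase two, the distance from head~1 to $\bmark$ equals the distance from head~2 to $v_i$ (both $i$). The main obstacle is simply carrying out this bookkeeping cleanly and handling the initial setup together with the final iteration as boundary cases, which is routine with deterministic two-way control.
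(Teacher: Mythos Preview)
Your proposal is correct and follows exactly the approach the paper sketches: use the first tape as a position counter to alternately access the $i$-th letters of $u$ and $v$ on the second tape. The paper only states the rough idea in one sentence, whereas you spell out the lockstep moves and the distance invariants, but the underlying construction is the same.
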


The rough idea to prove Lemma~\ref{syncrels} is to use the first tape to
keep track of the position when alternatively reading the $u$ and $v$ part
of the word written on the second tape. 
        
Now, using Lemma~\ref{syncrels} with relation $\incrementS$ we conclude
that the relation $\increment = \{ (w,
\#\base[\size{w}]{2}{i}\#\base[\size{w}]{2}{i+1}\# \mid 0 \leq i \leq
2^{\size{w}} - 1 \}$ is recognizable by a \dmodel.

Finally, $\enum$ is recognized by iterating the methods for $\increment$ 
(this is made possible thanks to the presence of $\#$) and stopping once
the second word is only composed of 1s.

\bibliographystyle{splncs03}
\bibliography{main} 

\end{document}